\providecommand{\U}[1]{\protect\rule{.1in}{.1in}}
\newtheorem{theorem}{Theorem}
\newtheorem{definition}{Definition}
\newtheorem{lemma}{Lemma}
\newtheorem{proposition}{Proposition}
\newtheorem{remark}{Remark}
\begin{document}

\title{Distributed Estimation in Multi-Agent Networks}

%

\author{\authorblockN{Lalitha Sankar and H. Vincent Poor}
\authorblockA{Dept. of Electrical Engineering,\\
Princeton University,
Princeton, NJ 08544.\\
{\{lalitha,poor\}}@princeton.edu\\}}%
%

\maketitle
%

\begin{abstract}%
\footnotetext{The research was supported in part by the AFOSR under MURI Grant
FA9550-09-1-0643 and in part by the NSF under Grant CCF-10-16671.
\par
{}}A problem of distributed state estimation at multiple agents that are
physically connected and have competitive interests is mapped to a distributed
source coding problem with additional privacy constraints. The agents interact
to estimate their own states to a desired fidelity from their (sensor)
measurements which are functions of both the local state and the states at the
other agents. For a Gaussian state and measurement model, it is shown that the
sum-rate achieved by a distributed protocol in which the agents broadcast to
one another is a lower bound on that of a centralized protocol in which the
agents broadcast as if to a virtual CEO converging only in the limit of a
large number of agents. The sufficiency of encoding using local measurements
is also proved for both protocols.%

\end{abstract}%

\section{Introduction}

We consider a network of $K$ distributed agents in which each agent observes
sensor measurements from a distinct part of a large interconnected physical
network. Examples of such networks include cyber-physical systems,
specifically the smart grid, in which an agent can be viewed as a regional
operator whose power measurements are affected by those at other agents due to
the physical grid connectivity. Agent $k$ is interested in estimating the
state (defined as a set of system parameters; for e.g., voltages and phases in
the electric grid) of its local network from its measurements, $Y_{k},$ which
are a function of both the local state $X_{k}$ and the states $X_{l},$
$l\not =k,$ $l,k\in\left\{  1,2,\ldots,K\right\}  $ of other agents in the
network where the states $X_{k}$ are assumed to be independent of each other.

Estimating $X_{k}$ at agent $k$ with high fidelity requires the agents to
interact and share data amongst themselves. While the estimate fidelity is
crucial to the control decisions made by the agents, in many distributed
systems, for competitive reasons, the agents wish to keep their state
information private. This leads to a problem of \textit{competitive privacy}
which captures the tradeoff between the utility to the agent (estimate
fidelity) that can be achieved via cooperation and the resulting privacy
leakage (quantified via mutual information).

Mapping utility to distortion and privacy to leakage quantified via mutual
information, one can abstract the competitive privacy problem as a distributed
source coding problem with additional leakage constraints. The set of all
achievable rate-fidelity-leakage tuples determines the utility-privacy
tradeoff region. In \cite{LS_SG1}, we introduced and studied this problem for
a two-agent interactive system with Gaussian states and noisy Gaussian
measurements. We proved that side-information (measurements at the other
agent) aware Wyner-Ziv encoding \cite{Wyner_Ziv} at each agent achieves both
the minimal rate and the minimal leakage for every choice of fidelity
(quantified via mean-squared distortion).

Even without additional privacy constraints, the problem of determining the
set of all rate-distortion tuples in a multi agent network is related to the
distributed source coding problem \cite{DSS5,DSS5a} which remains open.
Furthermore, for a relatively simpler setting obtained by assuming that a
central entity, often referred to as a chief executive officer (CEO), wishes
to estimate the states $X_{k},$ for all $k,$ from the transmissions of all
agents, we obtain a multi-variate (vector) Gaussian CEO problem which also
remains open except for specific cases \cite{DSS5b}.

Circumventing these challenges, we focus on the rate-distortion-leakage
behavior in the limit of large $K$ for a \textit{distributed protocol }in
which each agent encodes its measurements taking into account the prior
broadcasts of the other agents (henceforth referred to as \textit{progressive
encoding}) as well as the side-information at the other agents. We compare the
performance of this protocol with a \textit{centralized} \textit{protocol} in
which the agents broadcast their encoded messages as if to a virtual CEO. We
consider a noisy Gaussian measurement model at each agent with the same level
of interference from the states of the other agents. For this symmetric
\ model, our results demonstrate that the sum-rate achieved by distributed
protocol outperforms that for the centralized schemes with asymptotic
convergence with $K$. We also prove the sufficiency of encoding local
measurements for both protocols and present outer bounds for the per user rate
and leakage.

The paper is organized as follows. We introduce the model and communication
protocols in Section \ref{Sec_II}. In Section \ref{Sec_III} we develop the
achievable rate-distortion-leakage tuples for both protocols as well as outer
bounds. We conclude in Section \ref{Sec_CR}.

\section{\label{Sec_II}Preliminaries}

\subsection{Model and Metrics}

We consider a network of $K$ agents such that, at any time instant $i,$
$i=1,2,\ldots,n,$ the measurement $Y_{k,i}$ at agent $k$, $k=1,2,\ldots,K,$ is
related to the states $X_{m,i},$ $m=1,2,\ldots,K,$ at the agents as follows:%
\begin{equation}
Y_{k,i}=X_{k,i}+\sum\limits_{l=1,l\not =k}^{K}\sqrt{h}X_{l,i}+Z_{k,i},\text{
}k=1,2,\ldots,K, \label{Model}%
\end{equation}
where the state variables $X_{m,i}\sim\mathcal{N}(0,\sigma^{2})$, for all $m$
and $i$ are assumed to be independent and identically distributed (i.i.d.) and
are also independent of the i.i.d. noise variables $Z_{k,i}\sim\mathcal{N}%
(0,1)$. The coefficient $h>0$ is assumed to be fixed for all time and known at
all agents. We assume that the $k^{th}$ agent observes a sequence of $n$
measurements $Y_{k}^{n}=[Y_{k,1}$\textbf{ }$Y_{k,2}$ $\ldots$ $Y_{k,n}]$, for
all $k$, prior to communications.

\textit{Utility}: For the continuous Gaussian distributed state and
measurements, a reasonable metric for utility at the $k^{th}$ agent is the
mean square error $D_{k}$ between the original and the estimated state
sequences $X_{k}^{n}$ and $\hat{X}_{k}^{n}$, respectively.

\textit{Privacy}: The measurements at each agent in conjunction with the
quantized data shared by the other agents while enabling accurate estimation
also leaks information about the other agents' states. We capture this leakage
using mutual information.

\subsection{Communication Protocol}

We assume that each agent broadcasts a function of its measurements
(\textit{distributed procotol}) to all agents and they do so in a round-robin
fashion. We assume that all agents encode in one of the following two ways: i)
\textit{local encoding} in which each agent quantizes only its measurements;
or ii) \textit{progressive encoding} in which each agent encodes and transmits
taking into account both its measurements and prior communications from other
agents. In both cases, the agents transmit at a rate that takes into account
the correlated measurements and prior communications of other agents.

To better understand the advantage of the above distributed procotol, we also
consider the case where the agents broadcast as if communicating with a
virtual central operator, say CEO, henceforth referred to as the
\textit{centralized protocol}. This may be viewed as the case in which the
computing power at the agents is limited and the CEO shares with each agent
its received messages (which are then decoded at each agent). For either
protocol, the encoding can be either local or progressive. Let $I_{p}%
\in\left\{  0,1\right\}  $ and $I_{enc}\in\left\{  0,1\right\}  $ be random
variables that denote the choice of protocols and encodings such that
$I_{p}=1$ and $I_{p}=0$ for the distributed and centralized protocol,
respectively, and $I_{enc}=1$ and $I_{enc}=0$ for the progressive and local
encoding, respectively.

Formally, the encoder at agent $k$ maps its measurements to an index set
$\mathcal{J}_{k}$ where
\begin{equation}
\mathcal{J}_{k}\equiv\left\{  1,2,\ldots,J_{k}\right\}  \text{, }%
k=1,2,\ldots,K, \label{Enc1}%
\end{equation}
is the index set at the $k^{th}$ agent for mapping the measurement sequence,
and the prior communications (progressive encoding), via the encoder $f_{k}$,
$k=1,2,\ldots K,$ defined as%
\begin{equation}
f_{k}:\mathcal{Y}_{k}^{n}\times I_{enc}\cdot%
{\textstyle\prod\nolimits_{l=1}^{k-1}}
\mathcal{J}_{l}\rightarrow\mathcal{J}_{k}, \label{Enc2}%
\end{equation}
such that at the end of the $K$ broadcasts, one from each agent, the decoding
function $F_{k}$ at the $k^{th}$ agent (or the CEO) is a mapping from the
received message sets (both protocols) and the measurements (the distributed
procotol) to that of the reconstructed sequence denoted as%
\begin{equation}
F_{k}:\mathcal{J}_{1}\times\ldots\times\mathcal{J}_{K}\times\left(
\mathcal{Y}_{k}^{n}\cdot I_{p}\right)  \rightarrow\mathcal{\hat{X}}_{k}%
^{n},\text{ \ }k=1,2,\ldots,K. \label{Dec}%
\end{equation}
Let $M_{k}$ denotes the size of $J_{k}$. The expected distortion $D_{k}\ $at
the $k^{th}$ agent is given by
\begin{equation}
D_{k}=\frac{1}{n}\mathbb{E}\left[
{\textstyle\sum\limits_{i=1}^{n}}
\left(  X_{k,i}-\hat{X}_{k,i}\right)  ^{2}\right]  \text{, }k=1,2,\ldots K,
\label{Dist}%
\end{equation}
The privacy leakage$,$ $L_{k}^{\left(  l\right)  }$, about state $k$ at agent
$l,$ $l\not =k,$ is given by%

\begin{equation}
L_{k}^{\left(  l\right)  }=\frac{1}{n}I\left(  X_{k}^{n};J_{1},J_{2}%
,\ldots,J_{K},Y_{l}^{n}\right)  ,\text{ for all }k\not =l.
\end{equation}
The communication rate of the $k^{th}$ agent is denoted by
\begin{equation}
R_{k}=n^{-1}\log_{2}M_{k},k=1,2,\ldots,K. \label{CommRate}%
\end{equation}

\begin{definition}
\label{DefUP}The utility-privacy tradeoff region is the set of all
$(D_{1},\ldots,D_{k},L_{1}^{\left(  2\right)  },\ldots,L_{1}^{\left(
K\right)  },\ldots,L_{K}^{\left(  1\right)  },\ldots,L_{K}^{(K-1)})$ for which
there exists a coding scheme given by (\ref{Enc1})-(\ref{Dec}) with parameters
$(n,K,M_{1},M_{2},D_{1}+\epsilon,\ldots,D_{K}+\epsilon,L_{1}+\epsilon
,\ldots,L_{K}+\epsilon)$ for $n$ sufficiently large such that $\epsilon
\rightarrow0$ as $n\rightarrow\infty$.
\end{definition}

\section{\label{Sec_III}Main Results}

We use the following proposition, lemma, and function definition in the sequel
to compute the achievable distortions and rates.

\begin{proposition}
\label{Prop1}For (column) vectors $\underline{A}$ and $\underline{B}$, let
$K_{\underline{A}\underline{A}}=var\left(  \underline{A}\right)  =E\left[
\left(  \underline{A}-E\left[  \underline{A}\right]  \right)  \left(
\underline{A}^{T}-E\left[  \underline{A}^{T}\right]  \right)  \right]  $ and
$K_{\underline{A}\underline{B}}=E\left[  \left(  \underline{A}-E\left[
\underline{A}\right]  \right)  \left(  \underline{B}^{T}-E\left[
\underline{B}^{T}\right]  \right)  \right]  $ denote the covariance and
cross-correlation matrices, respectively. The conditional variance
$E[var(\underline{A}|\underline{B})]$ is then given as $E[var(\underline
{A}|\underline{B})]=K_{\underline{A}\underline{A}}-K_{\underline{A}%
\underline{B}}K_{\underline{B}\underline{B}}^{-1}K_{\underline{A}\underline
{B}}^{T}.$
\end{proposition}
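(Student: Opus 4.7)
The plan is to reduce the identity to two classical facts: the law of total variance in its matrix form, and the explicit form of the conditional mean for jointly Gaussian vectors (which is the setting of interest throughout the paper). The law of total variance gives
\begin{equation*}
K_{\underline{A}\underline{A}} \;=\; E[\mathrm{var}(\underline{A}\mid\underline{B})] \;+\; \mathrm{var}\bigl(E[\underline{A}\mid\underline{B}]\bigr),
\end{equation*}
so it suffices to evaluate the second term on the right-hand side and show it equals $K_{\underline{A}\underline{B}} K_{\underline{B}\underline{B}}^{-1} K_{\underline{A}\underline{B}}^{T}$.

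First I would introduce the affine estimator $\hat{\underline{A}} := E[\underline{A}] + M(\underline{B}-E[\underline{B}])$ and determine $M$ by the orthogonality principle: requiring $E\bigl[(\underline{A}-\hat{\underline{A}})(\underline{B}-E[\underline{B}])^{T}\bigr]=0$ yields $K_{\underline{A}\underline{B}} - M K_{\underline{B}\underline{B}} = 0$, hence $M = K_{\underline{A}\underline{B}} K_{\underline{B}\underline{B}}^{-1}$ (invertibility of $K_{\underline{B}\underline{B}}$ is guaranteed by the non-degenerate Gaussian noise in \eqref{Model}). Second, I would invoke the standard fact that for jointly Gaussian $(\underline{A},\underline{B})$ the minimum mean-square estimator is linear, so $E[\underline{A}\mid\underline{B}] = \hat{\underline{A}}$.

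Third, a direct calculation gives
\begin{equation*}
\mathrm{var}\bigl(E[\underline{A}\mid\underline{B}]\bigr) \;=\; M\, K_{\underline{B}\underline{B}}\, M^{T} \;=\; K_{\underline{A}\underline{B}} K_{\underline{B}\underline{B}}^{-1} K_{\underline{A}\underline{B}}^{T},
\end{equation*}
and substituting this into the law of total variance yields the claimed identity. As a sanity check I would also verify it by expanding $E[(\underline{A}-\hat{\underline{A}})(\underline{A}-\hat{\underline{A}})^{T}]$ directly, noting that the Gaussian assumption makes $\mathrm{var}(\underline{A}\mid\underline{B}=\underline{b})$ independent of $\underline{b}$, so the outer expectation is trivial.

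There is no real obstacle; this is a classical linear-MMSE identity. The only points requiring mild care are the matrix invertibility of $K_{\underline{B}\underline{B}}$ and the appeal to Gaussianity to identify the conditional mean with the linear estimator — both of which hold unambiguously under the model of Section \ref{Sec_II}.
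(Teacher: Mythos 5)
Your argument is correct: the matrix law of total variance plus the fact that the conditional mean of a jointly Gaussian vector is the affine estimator determined by the orthogonality principle gives exactly $E[\mathrm{var}(\underline{A}|\underline{B})]=K_{\underline{A}\underline{A}}-K_{\underline{A}\underline{B}}K_{\underline{B}\underline{B}}^{-1}K_{\underline{A}\underline{B}}^{T}$, and your computation $\mathrm{var}(E[\underline{A}|\underline{B}])=MK_{\underline{B}\underline{B}}M^{T}$ with $M=K_{\underline{A}\underline{B}}K_{\underline{B}\underline{B}}^{-1}$ is right (using symmetry of $K_{\underline{B}\underline{B}}$). Note, however, that the paper offers no proof of Proposition \ref{Prop1} at all: it is stated as a classical fact and used as a computational tool, so there is no "paper route" to compare against; your write-up simply supplies the standard derivation. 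Two small observations. First, you correctly identify the one genuine hypothesis hiding in the statement: as written, without joint Gaussianity (or at least an affine conditional mean) the right-hand side is only the linear-MMSE error covariance and merely dominates $E[\mathrm{var}(\underline{A}|\underline{B})]$ in the positive-semidefinite order, so your explicit appeal to the Gaussian model of Section \ref{Sec_II} is needed, not optional. Second, an equally clean alternative that avoids the law of total variance is to expand $E[(\underline{A}-\hat{\underline{A}})(\underline{A}-\hat{\underline{A}})^{T}]$ directly and kill the cross terms by orthogonality (equivalently, read the identity off the Schur complement of $K_{\underline{B}\underline{B}}$ in the joint covariance); your own "sanity check" is essentially this computation, so either version stands on its own.
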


\begin{lemma}
\label{Lemma1}For a $K\times K$ symmetric Toeplitz matrix whose diagonal
entries are all $a,$ and off-diagonal entries are all $b$ the determinant is
$\left(  a+\left(  K-1\right)  b\right)  \left(  a-b\right)  ^{\left(
K-1\right)  }.$
\end{lemma}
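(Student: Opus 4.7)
The plan is to observe that the given symmetric Toeplitz matrix $M$ with diagonal $a$ and off-diagonal $b$ admits the clean algebraic decomposition
\begin{equation*}
M = (a-b)\,I_K + b\,\mathbf{1}\mathbf{1}^{T},
\end{equation*}
where $I_K$ is the $K\times K$ identity and $\mathbf{1}$ is the all-ones column vector. This representation is the key structural fact and makes the spectrum of $M$ transparent.

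From here I would compute the eigenvalues directly. The all-ones vector $\mathbf{1}$ satisfies $\mathbf{1}\mathbf{1}^T \mathbf{1} = K\mathbf{1}$, so $M\mathbf{1} = (a-b+Kb)\mathbf{1} = (a+(K-1)b)\mathbf{1}$, giving one eigenvalue $\lambda_1 = a+(K-1)b$. For any vector $v$ orthogonal to $\mathbf{1}$, the rank-one term $\mathbf{1}\mathbf{1}^T v$ vanishes, hence $Mv = (a-b)v$; the orthogonal complement of $\mathbf{1}$ has dimension $K-1$, yielding $\lambda_2 = \cdots = \lambda_K = a-b$. Since $M$ is symmetric and therefore diagonalizable, the determinant is the product of its eigenvalues, giving
\begin{equation*}
\det M = (a+(K-1)b)\,(a-b)^{K-1},
\end{equation*}
as claimed.

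As an alternative in case a more elementary route is preferred, one could instead do column operations on the determinant directly: add columns $2, 3, \ldots, K$ into column $1$, which turns every entry of column $1$ into $a+(K-1)b$; then factor this out, and subtract $b$ times the (now all-ones) first column from each remaining column to produce an upper triangular matrix whose remaining diagonal entries are all $a-b$. The product of diagonal entries yields the same formula. I would likely include only the eigenvalue proof in the paper since it is shorter and more conceptual.

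I do not anticipate a real obstacle here; the only mild subtlety is justifying the $(K-1)$-fold eigenvalue $a-b$, which is immediate from the rank-one structure of $\mathbf{1}\mathbf{1}^T$ combined with the dimension of $\mathbf{1}^{\perp}$. The result is standard linear algebra and will be invoked later purely as a computational tool when evaluating determinants of covariance matrices arising from the symmetric Gaussian model in \eqref{Model}.
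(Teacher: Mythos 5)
Your eigenvalue argument is correct and complete, but it takes a different route from the paper, which proves the lemma purely by elementary determinant manipulations: add columns $2$ through $K$ to column $1$ (making every entry of that column $a+(K-1)b$), then subtract row $1$ from each remaining row, leaving a triangular structure with the remaining diagonal entries equal to $a-b$. Your primary proof instead writes the matrix as $(a-b)I_K + b\,\mathbf{1}\mathbf{1}^{T}$ and reads off the spectrum: eigenvalue $a+(K-1)b$ on the span of $\mathbf{1}$ and eigenvalue $a-b$ with multiplicity $K-1$ on $\mathbf{1}^{\perp}$, with the determinant as the product. Both are valid; the spectral route is more conceptual and also hands you the eigenvalues themselves (useful if one later needs inverses or quadratic forms of such covariance-type matrices), while the paper's two-line row/column manipulation is more elementary and matches the purely computational role the lemma plays in evaluating determinants like $\left\vert E\left[ var\left( \underline{U}_{K}|Y_{1}\right) \right] \right\vert$. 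Your sketched alternative via column operations is essentially the paper's proof, so you have in effect reproduced it as a fallback; either version would serve.
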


\begin{proof}
The determinant is obtained by the following two operations: i) add columns
2-$K$ to column 1, and ii) subtract row $1$ from each of the remaining rows.
\end{proof}

\begin{definition}
For some $\alpha$, $\beta\in\mathcal{R}^{+},$ the function $f_{1}\left(
k,c\right)  \equiv\alpha+\left(  k-2\right)  \beta-\left(  k-1\right)  c$
varies over $k\in\left[  1,K\right]  $ and $c\in\mathcal{R}^{+}.$
\end{definition}

\subsection{Distortion}

We assume that each agent has the same distortion constraint $D$. The
distortion $D$ at each agent ranges from a minimum achieved when it has
perfect access to the measurements at all agents to a maximum achieved when it
estimates using only its own measurements. From the symmetry of the model in
(\ref{Model}), the minimal (resp. maximal) distortion achieved at each agent
is the same. Let $D_{\min}$ and $D_{\max}$ denote the minimal and maximal
distortions, respectively, at each agent. For the Gaussian model considered
here with minimum mean square error (MSE) constraints, we have
\begin{align}
D_{\min}  &  =E\left[  var(X_{1}|Y_{1}Y_{2}\ldots Y_{K})\right]  ,\text{
and}\label{Dmin_def}\\
D_{\max}  &  =E\left[  var(X_{1}|Y_{1})\right]  . \label{Dmax_def}%
\end{align}
We now determine $D_{\min}$ and $D_{\max}$. Let
\begin{subequations}
\label{alp_beta_defs}%
\begin{align}
\alpha &  \equiv E(Y_{l}^{2})=\sigma_{X}^{2}\left(  1+h\left(  K-1\right)
\right)  +1,\text{ for all }l\\
\beta &  \equiv E(Y_{l}Y_{k})=\sigma_{X}^{2}\left(  2\sqrt{h}+h\left(
K-2\right)  \right)  ,\text{ }l\not =k.
\end{align}
Note that for large $K$, $\alpha\rightarrow h\left(  K-1\right)  \sigma
_{X}^{2},$ and $\beta\rightarrow h\left(  K-2\right)  \sigma_{X}^{2}.$

\textit{Computation of }$D_{\max}$: Expanding (\ref{Dmax_def}), we obtain
\end{subequations}
\begin{equation}
D_{\max}=E\left[  var(X_{1}|Y_{1})\right]  =\sigma_{X}^{2}\left(
1-\frac{\sigma_{X}^{2}}{\alpha}\right)  . \label{Dmax}%
\end{equation}
For large $K,$ $D_{\max}\rightarrow\sigma_{X}^{2}$.

\textit{Computation of }$D_{\min}$: Expanding (\ref{Dmin_def}), we have
\begin{align}
D_{\min}  &  =E\left[  var(X_{1}|Y_{1}Y_{2}\ldots Y_{K})\right] \\
&  =\frac{\left\vert E\left[  var(X_{1}Y_{2}\ldots Y_{K}|Y_{1})\right]
\right\vert }{\left\vert E\left[  var(Y_{2}\ldots Y_{K}|Y_{1})\right]
\right\vert } \label{D_max2}%
\end{align}
where the simplification in (\ref{D_max2}) results from the assumption of
jointly Gaussian random variables. Applying Lemma \ref{Lemma1}, for
\begin{align}
c_{1}  &  =\sigma_{X}^{2}-\sigma_{X}^{4}/\alpha,\text{ }c_{2}=\sigma_{X}%
^{2}\left(  \sqrt{h}-\beta/\alpha\right)  ,\\
c_{3}  &  =\alpha-\beta^{2}/\alpha,\text{ and }c_{4}=\beta-\beta^{2}/\alpha,
\end{align}
we obtain the minimum distortion $D_{\min}$ as%
\begin{equation}
D_{\min}=D_{\max}\left(  1-\frac{\left(  K-1\right)  \frac{\sigma_{X}%
^{2}\left(  \sqrt{h}-\beta/\alpha\right)  ^{2}}{\left(  1-\sigma_{X}%
^{2}/\alpha\right)  }}{f_{1}\left(  K,\beta^{2}/\alpha\right)  }\right)  .
\label{Dmin_final}%
\end{equation}

\begin{remark}
For $K\rightarrow\infty,$ $D_{\min}\rightarrow D_{\max}(1-(1-\sqrt{h})^{2}/h)$.
\end{remark}

\subsection{Distributed Protocol}

A general coding strategy for this distributed source coding problem needs to
take into account: a) the order of agent broadcasts; b) multiple encoding
possibilities at each agent depending on whether the received data is used
alongwith local measurements in encoding; c) exploiting the correlated
measurements at other agents in broadcasting just sufficient data for other
agents to achieve their distortions; and d) multiple rounds of interactions.
We present a distributed encoding scheme with a single round of communication
(for simplicity of analysis) in which the agents broadcast in order (the
source permutation choice is irrelevant due to the symmetry of the model). The
local and progressive coding schemes differ in including the received data in
encoding at each agent, while the centralized and distributed protocols differ
in whether they exploit the correlated measurements at the other agents.

The achievable distortion $D$ in general depends on the encoding scheme
chosen. Let $R_{k}$ and $\tilde{R}_{k}$ denote the rates for the local and
progessive encoding schemes, respectively. We first consider the progressive
encoding scheme in which each agent broadcasts (to all other agents) a noisy
function of both its measurements and prior communications. More precisely,
agent $k$ maps its measurement and prior communication sequences to one among
a set of $2^{n\tilde{R}_{k}}$ $\tilde{U}_{k}^{n}$ sequences chosen to satisfy
the distortion constraints. The $\tilde{U}_{k}^{n}$ sequences are generated
via an i.i.d distribution of $\tilde{U}_{k,i}$ for all $i$ such that
$\tilde{U}_{1,i}=Y_{1,i}+Q_{1,i}$ and for all $k>1,$ $\tilde{U}_{k,i}=Y_{k,i}+%
{\textstyle\sum\nolimits_{l=1}^{k-1}}
a_{k,l}\tilde{U}_{l,i}+Q_{k,i}$ where $a_{k,l}\in\mathcal{R}$, and
$Q_{k,i}\sim N\left(  0,\sigma_{Q}^{2}\right)  $ is independent of $Y_{k,i}$
for all $k=1,2,\ldots,K,$ and $i=1,2,\ldots,n.$

The achievable distortion $D$ at agent $k$ as a result of estimating its state
using both its measurements $Y_{k}^{n}$ and the received sequences $\tilde
{U}_{l}^{n},$ for all $l\not =k,$ is such that $D\in\left[  D_{\min},D_{\max
}\right]  $ where $D_{\max}$ is achieved when $U_{l}^{n}=0$ for all $l$ and
$D=D_{\min}$ for $\sigma_{Q}^{2}=0$. On the other hand, for the local encoding
scheme, let $U_{k,i}=Y_{k,i}+Q_{k,i},$ for all $k$ and $i,$ such that agent
$k$ maps \textit{only} its measurement sequences to one among a set of
$2^{nR_{k}}$ $U_{k}^{n}$ sequences chosen to satisfy the distortion constraints.

\begin{theorem}
The sets $\mathcal{D}$ of all achievable distortions $D$ for the local and
progressive encoding schemes for the distributed protocol are the same.
\end{theorem}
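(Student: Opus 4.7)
The plan is to show that, for the distributed protocol, the decoder at agent $k$ has access to sufficient statistics generating the same $\sigma$-algebra under both schemes, so the achievable MMSE distortion is identical as a function of the injection noise variance $\sigma_Q^2$. Since the theorem concerns only the distortion set (not rates), I do not need to track the rate expressions, which do differ between the two schemes.

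First, I would note that under the distributed protocol, agent $k$ estimates $X_k^n$ from $(Y_k^n, J_1, \ldots, J_K)$, so by the standard Wyner-Ziv argument for Gaussian sources the achievable distortion per symbol is (asymptotically) determined by the MMSE $E[\mathrm{var}(X_{k,i} \mid Y_{k,i}, V_{1,i}, \ldots, V_{K,i})]$, where $V_{l,i}$ stands for $\tilde{U}_{l,i}$ under progressive encoding and for $U_{l,i} = Y_{l,i} + Q_{l,i}$ under local encoding. Applying Proposition~\ref{Prop1} reduces this to a linear-algebraic comparison of covariance matrices, so it suffices to compare the information contents of the two collections $\{\tilde{U}_{l,i}\}_{l=1}^{K}$ and $\{U_{l,i}\}_{l=1}^{K}$.

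Second, I would unroll the progressive recurrence. By definition $\tilde{U}_{1,i} = Y_{1,i}+Q_{1,i} = U_{1,i}$, and for $k>1$,
\begin{equation}
\tilde{U}_{k,i} - \sum_{l=1}^{k-1} a_{k,l}\tilde{U}_{l,i} = Y_{k,i} + Q_{k,i} = U_{k,i}.
\end{equation}
Thus the map $(U_{1,i},\ldots,U_{K,i}) \mapsto (\tilde{U}_{1,i},\ldots,\tilde{U}_{K,i})$ is a lower-triangular linear transformation with unit diagonal, hence invertible. Consequently the two vectors generate the same $\sigma$-algebra (equivalently, they are related by an invertible deterministic linear map), and conditioning on one is the same as conditioning on the other jointly with $Y_k^n$. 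Hence
\begin{equation}
E[\mathrm{var}(X_{k,i}\mid Y_{k,i},\tilde{U}_{1,i},\ldots,\tilde{U}_{K,i})] = E[\mathrm{var}(X_{k,i}\mid Y_{k,i},U_{1,i},\ldots,U_{K,i})]
\end{equation}
for every choice of the coefficients $a_{k,l}$ and every $\sigma_Q^2$.

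Third, I would conclude that any distortion achievable by the progressive scheme with parameters $(\{a_{k,l}\},\sigma_Q^2)$ is also achievable by the local scheme with the same $\sigma_Q^2$, yielding $\mathcal{D}_{\mathrm{prog}} \subseteq \mathcal{D}_{\mathrm{loc}}$; the reverse inclusion is immediate by choosing $a_{k,l} = 0$ in the progressive scheme. Sweeping $\sigma_Q^2$ over $[0,\infty)$ then traces out the same interval $[D_{\min},D_{\max}]$ for both schemes. The one point requiring mild care is justifying that the conditional-variance computation (valid for jointly Gaussian auxiliaries) really captures the asymptotically achievable quantization distortion; this is standard for Wyner-Ziv-style codes and is not the main conceptual content of the proof. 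There is no substantial obstacle---the result is essentially the observation that the progressive feedback terms $\sum_{l<k} a_{k,l}\tilde{U}_{l,i}$ are decodable common knowledge and so cannot enlarge the information available for estimation, only affect the description rate.
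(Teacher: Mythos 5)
Your proposal is correct and takes essentially the same route as the paper: the paper's one-line proof rests on exactly your observation, namely that $\tilde{U}_1=U_1$ and, conditioned on the earlier auxiliaries, conditioning on $\tilde{U}_k$ is equivalent to conditioning on $U_k$ (your invertible unit-diagonal triangular map, stated inductively), so $E[\mathrm{var}(X_1|Y_1\tilde{U}_1\ldots\tilde{U}_K)]=E[\mathrm{var}(X_1|Y_1U_1\ldots U_K)]$ and the distortion sets coincide.
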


\begin{proof}
For Gaussian codebooks and Gaussian measurements and from symmetry of the
model, the distortion $D$ at each agent is given by
\begin{align}
D  &  =\mathbb{E}\left[  var\left(  X_{1}|Y_{1}\tilde{U}_{1}\tilde{U}%
_{2}\tilde{U}_{3}\ldots\tilde{U}_{K}\right)  \right] \label{Dist_Prog}\\
&  =\mathbb{E}\left[  var\left(  X_{1}|Y_{1}U_{1}U_{2}U_{3}\ldots
U_{K}\right)  \right]  \in\lbrack D_{\min},D_{\max}] \label{Dist_LP1}%
\end{align}
where in (\ref{Dist_Prog}) we have used that fact that $\tilde{U}_{1}=U_{1},$
and conditioned on $U_{1},$it suffices to condition on $U_{2},$ and similarly
for the remaining $U_{k},$ $k>2$.
\end{proof}

\textit{Computation of }$D$: Using the independence of the quantization noise
$Q_{k}$ for all $k,$ as well as the independence of $Q_{k}$ and $X_{k}$, we
have $E\left[  U_{k}U_{l}\right]  =E\left[  Y_{k}Y_{l}\right]  =\beta$ for all
$l\not =k$ and $E\left[  U_{k}^{2}\right]  =E\left[  Y_{k}^{2}\right]
+E\left[  Q_{k}^{2}\right]  =\alpha+\sigma_{Q}^{2}.$ Thus, $D$ is obtained in
a manner analogous to the calculation of $D_{\min}$ with the replacement of
$c_{3}$ by $c_{3}+\sigma_{Q}^{2}$. Thus, we have%
\begin{equation}
D=D_{\max}\left(  1-\frac{\left(  K-1\right)  \frac{\sigma_{X}^{2}\left(
\sqrt{h}-\beta/\alpha\right)  ^{2}}{\left(  1-\sigma_{X}^{2}/\alpha\right)  }%
}{f_{1}\left(  K,\frac{\beta^{2}}{\alpha}\right)  +\sigma_{Q}^{2}}\right)  .
\end{equation}

\textit{Rate Computation}: We consider a round-robin protocol in which agent 1
broadcasts a quantized function of its measurements and prior communications
at a rate which takes into account all the side information at all other
agents. Thus, the rate $\tilde{R}_{1}$ required is the maximal of the rates
required to each agent and is given by
\begin{subequations}
\begin{align}
\tilde{R}_{1}  &  \geq I(\tilde{U}_{1};Y_{1})-\min\left(  I(\tilde{U}%
_{1};Y_{2}),\ldots,I(\tilde{U}_{1};Y_{K})\right) \\
&  =I(U_{1};Y_{1})-I(U_{1};Y_{2})=R_{1} \label{Rate_R1}%
\end{align}
where (\ref{Rate_R1}) follows from the symmetry of the measurement model, the
fact that $\tilde{U}_{1}=U_{1},$ and $R_{1}$ is the minimal rate required at
agent 1 for the local scheme. Next, agent 2 analogously broadcasts a function
of its measurements at a rate $R_{2}$ given by
\end{subequations}
\begin{subequations}
\begin{align}
\tilde{R}_{2}  &  \geq I(\tilde{U}_{2};Y_{2}\tilde{U}_{1})-\min_{l\in\left\{
1,...,K\right\}  ,l\not =2}I(\tilde{U}_{2};Y_{l}\tilde{U}_{1})\\
&  =I(\tilde{U}_{2};Y_{2}|\tilde{U}_{1})-\min_{l\in\left\{  1,...,K\right\}
,l\not =2}I(\tilde{U}_{2};Y_{1}|\tilde{U}_{1})\\
&  =I(U_{2};Y_{2})-I(U_{2};Y_{1})=R_{2} \label{Rate_R2}%
\end{align}
where (\ref{Rate_R2}) follows from $h(\tilde{U}_{2}|Y_{1}\tilde{U}%
_{1})-h(\tilde{U}_{2}|Y_{2}\tilde{U}_{1})=h(U_{2}|Y_{1})-h(U_{2}|Y_{2})$ since
$U_{2}-Y_{2}-U_{1}$ form a Markov chain and due to the symmetry of the model.
It can be verified easily that the bound in (\ref{Rate_R2}) is the minimal
rate $R_{2}$ for the local encoding scheme. One can similarly show that the
rate at which agent 3 broadcasts is
\end{subequations}
\begin{subequations}
\begin{align}
\tilde{R}_{3}  &  \geq I(\tilde{U}_{3};Y_{3}\tilde{U}_{1}\tilde{U}_{2}%
)-\min_{l\in\left\{  1,...,K\right\}  ,l\not =3}I(\tilde{U}_{3};Y_{1}\tilde
{U}_{1}\tilde{U}_{2})\\
&  =I(U_{3};Y_{3})-I(U_{3};Y_{1}U_{2})=R_{3}%
\end{align}
where we have used the fact that $U_{3}-Y_{3}-U_{1}U_{2}$ and $U_{1}%
-Y_{1}-U_{3}$ form Markov chains. Generalizing we have, for all $k>1,$%
\end{subequations}
\begin{subequations}
\begin{equation}
\tilde{R}_{k}=R_{k}\geq I(U_{k};Y_{k})-I(U_{k};Y_{1}U_{1}\ldots U_{k-1}%
),\text{ } \label{Rate_Rk}%
\end{equation}
where the bound in (\ref{Rate_Rk}) is the minimal rate at which agent $k$ is
required to broadcast when it only encodes $Y_{k}^{n}$.

\textit{Calculation of Leakage}: For the proposed progressive encoding, the
leakage of the state of agent $k$ at any other agent $j\not =k,$ for all such
$k,j,$ is bounded as
\end{subequations}
\begin{subequations}
\label{Leakage}%
\begin{align}
L_{k}^{(j)}  &  =\frac{1}{n}I(X_{k}^{n};Y_{j}^{n}J_{1}J_{2}\ldots
J_{K}),\text{ }j\not =k\\
&  \geq I(X_{1};Y_{2}\tilde{U}_{1}\ldots\tilde{U}_{K})=I(X_{1};Y_{2}%
U_{1}\ldots U_{K})\label{Leakage_2}\\
&  =\frac{1}{2}\log\left(  \frac{\alpha f_{1}\left(  K,\beta^{2}%
/\alpha\right)  }{\left(  \alpha-\sigma_{X}^{2}\right)  f_{1}\left(
K,c_{5}\right)  }\right)  \label{Leakage_3}%
\end{align}
where (\ref{Leakage_2}) is a result of the model symmetry, the code
construction and typicality arguments and is omitted for brevity. The bound in
(\ref{Leakage_3}) follows from the relation of the code constructions for the
two encoding schemes and $c_{5}=\left.  (\beta-\sqrt{h}\sigma_{x}^{2}%
)^{2}\right/  \left(  \alpha-\sigma_{x}^{2}\right)  +h\sigma_{X}^{2}$.
\end{subequations}
\begin{theorem}
\label{Lemma_3}It is sufficient to encode the local measurements at each agent
in the distributed protocol.
\end{theorem}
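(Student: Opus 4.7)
The plan is to assemble the theorem from the three equivalences already established in the preceding development (distortion, rate, leakage), showing that each triple achievable by a progressive code at agent $k$ is matched by a local code. Concretely, I would fix an arbitrary progressive codebook $\{\tilde U_k\}$ of the form $\tilde U_k = Y_k + \sum_{l<k} a_{k,l}\tilde U_l + Q_k$ and exhibit a local codebook $\{U_k\}$ with $U_k = Y_k + Q_k$ (same quantization noise variance $\sigma_Q^2$) that achieves the same distortion, the same per-agent rate, and the same leakage.

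First, for distortion, I would invoke Theorem~1 directly: since each $\tilde U_k$ is an invertible affine function of $(U_1,\ldots,U_k)$, the $\sigma$-algebras $\sigma(Y_1,\tilde U_1,\ldots,\tilde U_K)$ and $\sigma(Y_1,U_1,\ldots,U_K)$ coincide, so the conditional variances in (\ref{Dist_Prog})–(\ref{Dist_LP1}) agree. Second, for the rates, I would point to the chain of equalities culminating in (\ref{Rate_Rk}): using the Markov chains $U_k - Y_k - (U_1,\ldots,U_{k-1})$ (which follow from the fact that $Q_k$ is independent of everything else) and the symmetry of the measurement model, the mutual-information differences for the progressive code reduce to the corresponding ones for the local code. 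So $\tilde R_k = R_k$ for every $k$. Third, for the leakage, I would re-use the equality $I(X_1;Y_2\tilde U_1\cdots\tilde U_K)=I(X_1;Y_2 U_1\cdots U_K)$ observed in (\ref{Leakage_2}), which again follows from the invertible linear relation between the two families.

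Combining these three equivalences gives that, for every achievable $(D,\tilde R_1,\ldots,\tilde R_K,\{\tilde L_k^{(j)}\})$ under progressive encoding, the local-encoding scheme with $U_k=Y_k+Q_k$ achieves the same tuple. Conversely, local encoding is a special case of progressive encoding (take $a_{k,l}=0$), so any local tuple is trivially progressive. Hence the two achievable regions coincide.

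The main obstacle is being precise about the round-robin causal structure: one must argue that when agent $k$ replaces its progressive encoding of $\tilde U_k$ by a local encoding of $U_k$, the downstream agents $k+1,\ldots,K$ can still perform their own progressive encoding because $U_1,\ldots,U_{k-1}$ together with the local measurements generate the same information as $\tilde U_1,\ldots,\tilde U_{k-1}$. This is exactly the invertibility of the affine map $(U_1,\ldots,U_k)\mapsto(\tilde U_1,\ldots,\tilde U_k)$, applied sequentially in $k$. Once this invariance is in place, the distortion, rate, and leakage equalities follow without any further calculation, and the theorem is established.
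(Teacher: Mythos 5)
Your proposal is correct and takes essentially the same route as the paper: the paper proves Theorem~\ref{Lemma_3} by combining exactly the three equivalences you assemble — the distortion equality in (\ref{Dist_LP1}), the rate equalities culminating in (\ref{Rate_Rk}), and the leakage equality in (\ref{Leakage_3}) — all of which rest on the unit-triangular (hence invertible) relation between $(\tilde{U}_{1},\ldots,\tilde{U}_{k})$ and $(U_{1},\ldots,U_{k})$ that you make explicit. Your additional care about the round-robin causal structure is a sound elaboration of what the paper leaves implicit, not a different argument.
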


Theorem \ref{Lemma_3} follows directly from the fact that for Gaussian
encoding, from (\ref{Dist_LP1}), (\ref{Rate_Rk}), and (\ref{Leakage_3}), we
have that the set of all rate-distortion-leakage tuples achieved by the local
and progressive encoding schemes is the same.

The sum-rate of the distributed scheme $R_{sum}^{Dist}=\sum\nolimits_{k=1}%
^{K}R_{k}$ can be simplified as
\begin{subequations}
\label{Rsum}%
\begin{align}
R_{sum}^{Dist}  &  =h\left(  U_{2}U_{3}\ldots U_{K}|Y_{1}\right)
+h(U_{1}|Y_{2})-\frac{K}{2}\log\left(  2\pi e\sigma_{Q}^{2}\right)
\label{RsumDist5}\\
&  =\frac{K}{2}\log\left(  \frac{\alpha+\sigma_{Q}^{2}-\beta}{\sigma_{Q}^{2}%
}\right)  +\frac{1}{2}\log\left(  \frac{\left(  \alpha+\sigma_{Q}^{2}%
-\frac{\beta^{2}}{\alpha}\right)  }{\left(  \alpha+\sigma_{Q}^{2}%
-\beta\right)  }\right) \label{RsumDistFin}\\
&  \text{ \ \ }+\frac{1}{2}\log\left(  \left.  (f_{1}\left(  K,\beta
^{2}/\alpha\right)  +\sigma_{Q}^{2})\right/  \left(  \alpha+\sigma_{Q}%
^{2}-\beta\right)  \right) \nonumber
\end{align}
where (\ref{RsumDistFin}) is obtained from (\ref{RsumDist5}) by determining
$\left\vert E\left[  var\left(  \underline{U}_{K}|Y_{1}\right)  \right]
\right\vert $ where $\underline{U}_{K-1}=\left[  U_{2}\text{ }U_{3}\text{
}\ldots\text{ }U_{K}\right]  ^{T}$ denotes a column vector of length $\left(
K-1\right)  $. By expanding $E\left[  var\left(  \underline{U}_{K-1}%
|Y_{1}\right)  \right]  $ using Proposition \ref{Prop1}, one can verify that
$\left\vert E\left[  var\left(  \underline{U}_{K}|Y_{1}\right)  \right]
\right\vert $ simplifies to finding the determinant of the $\left(
K-1\right)  \times\left(  K-1\right)  $ Toeplitz matrix with diagonal and off
diagonal entries $\alpha+\sigma_{Q}^{2}-\frac{\beta^{2}}{\alpha}$ and
$\beta-\frac{\beta^{2}}{\alpha},$ respectively, which from Lemma \ref{Lemma1}
is given by $f_{1}\left(  K,\beta^{2}/\alpha\right)  (\alpha+\sigma_{Q}%
^{2}-\beta)^{\left(  K-2\right)  }.$ One can similarly show that $E\left[
var\left(  U_{1}|Y_{2}\right)  \right]  =\alpha+\sigma_{Q}^{2}-\beta
^{2}/\alpha.$

In the limit of $K\rightarrow\infty,$ $\left(  K-2\right)  \beta-\left(
K-1\right)  \frac{\beta^{2}}{\alpha}\rightarrow0,$ $\alpha-\beta^{2}%
/\alpha\rightarrow h,$ $\alpha-\beta\rightarrow h$, and therefore, the second
and third log terms in (\ref{RsumDistFin}) scale as $\log\left(  K\right)  .$
Thus, in the limit, the per agent rate $R=R_{sum}^{Dist}/K$ is given by
\end{subequations}
\begin{equation}
\lim_{K\rightarrow\infty}R=\frac{1}{2}\log\left(  \frac{\alpha+\sigma_{Q}%
^{2}-\beta}{\sigma_{Q}^{2}}\right)  .
\end{equation}

\subsection{Distributed vs. Centralized}

We now compare the distributed protocol to a centralized protocol in which
each agent broadcasts at a rate intended for a (virtual) CEO, and thus, is
oblivious of the correlated measurements at the other agents. Here again, the
agents can use a progressive encoding scheme analogously to the distributed
protocol. As in the distributed protocol, here too one can show that a local
encoding scheme suffices, in which agent $k$ generates a codebook $U_{k}^{n}$
whose entries $U_{k,i}$ are generated in an i.i.d fashion such that
$U_{k,i}=Y_{k,i}+Q_{k,i}$, $Q_{k,i}$ is independent of $Y_{k,i}$ and
$Q_{l,i},$ for all $l\not =k,$ for all $k,$ and for all $i.$ The compression
rates are bounded as follows. First, agent $1$ transmits its quantized
measurements at a rate $R_{1}$ such that for error-free decoding of $U_{1}%
^{n}$ at the decoder, we require%
\begin{equation}
R_{1}\geq I\left(  U_{1};Y_{1}\right)  .
\end{equation}
Agent 2 takes into account the knowledge of $U_{1}^{n}$ at all agents and
broadcasts at a rate%
\begin{equation}
R_{2}\geq I\left(  U_{2};Y_{2}\right)  -I\left(  U_{2};U_{1}\right)  .
\end{equation}
Note that the agents broadcast taking into account the prior transmissions (as
if to a CEO) but not the side information at the other agents. Continuing
similarly, we have for all $k\geq2$,%
\begin{equation}
R_{k}\geq I\left(  U_{2};Y_{k}\right)  -I\left(  U_{k};U_{1}U_{2}\ldots
U_{k-1}\right)  .
\end{equation}
The resulting sum rate $R_{sum}^{CEO}=\sum\nolimits_{k=1}^{K}R_{k}$ can be
simplified as
\begin{align}
R_{sum}^{CEO}  &  =\sum\nolimits_{k=1}^{K}I(U_{k};Y_{k})-\sum\nolimits_{k=2}%
^{K}I(U_{k};U_{1}\ldots U_{k-1})\\
&  =h\left(  U_{K},U_{K-1}\ldots U_{1}\right)  -\frac{K}{2}\log\left(  2\pi
e\sigma_{Q}^{2}\right) \\
&  =\frac{K}{2}\log\left(  \frac{\left(  \alpha+\sigma_{Q}^{2}-\beta\right)
}{\sigma_{Q}^{2}}\right) \label{RCEOsum}\\
&  \text{ \ \ }+\frac{1}{2}\log\left(  \frac{\left(  \alpha+\sigma_{Q}%
^{2}+\left(  K-1\right)  \beta\right)  }{\left(  \alpha+\sigma_{Q}^{2}%
-\beta\right)  }\right)  .\nonumber
\end{align}
Thus, the rate on average per user is $R^{CEO}=R_{sum}^{CEO}/K$ which
converges in the limit of a large number of agents $K$ to
\begin{equation}
\lim_{K\rightarrow\infty}R^{CEO}=\frac{1}{2}\log\left(  \frac{\left(
\alpha+\sigma_{Q}^{2}-\beta\right)  }{\sigma_{Q}^{2}}\right)  .
\end{equation}

Comparing (\ref{RsumDistFin}) and (\ref{RCEOsum}), we can verify that for
every choice of $\sigma_{Q}^{2},$ and hence $D,$ $R_{sum}^{CEO}>R_{sum}%
^{Dist}$. Furthermore, one can also show that the leakage at each agent for
the centralized protocol is the same as the distributed protocol in
(\ref{Leakage}) and is the same for both the local and progressive encoding
schemes. The following theorem summarizes our results.

\begin{theorem}
The average per user rate of the centralized protocol is strictly lower
bounded by that for the distributed protocol and converges to this lower bound
only in the limit of large $K.$
\end{theorem}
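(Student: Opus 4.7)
The plan is to compare the closed-form sum-rate expressions in~(\ref{RsumDistFin}) and~(\ref{RCEOsum}) directly. Both share the same leading term $\frac{K}{2}\log\!\left((\alpha+\sigma_Q^2-\beta)/\sigma_Q^2\right)$, so after cancellation the per-user gap $R^{CEO}-R^{Dist}$ is $(1/K)$ times a combination of at most $O(\log K)$ correction terms. Showing (i) that this combination is strictly positive for every finite $K$ and (ii) that it grows no faster than $\log K$ will simultaneously establish the strict lower bound and the convergence of the centralized per-user rate down to the distributed one as $K\to\infty$.

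Carrying this out, after the common term cancels, the sum-rate gap consolidates into a single log,
\begin{equation}
R^{CEO}_{sum}-R^{Dist}_{sum}=\frac{1}{2}\log\!\left(\frac{(\alpha+\sigma_Q^2+(K-1)\beta)(\alpha+\sigma_Q^2-\beta)}{(\alpha+\sigma_Q^2-\beta^2/\alpha)\bigl(f_1(K,\beta^2/\alpha)+\sigma_Q^2\bigr)}\right).
\end{equation}
Writing $s=\alpha+\sigma_Q^2$ and $\gamma=\beta^2/\alpha$ and invoking the identity $\alpha\gamma=\beta^2$, a short expansion shows that the numerator minus the denominator inside the log equals $\gamma\bigl(f_1(K,\gamma)+K\sigma_Q^2\bigr)$. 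The quantity $f_1(K,\gamma)+\sigma_Q^2$ was already identified in the derivation of $R^{Dist}_{sum}$ as the factor $a+(K-2)b$ produced by Lemma~\ref{Lemma1} when computing the determinant of a positive semi-definite conditional covariance matrix, and is therefore strictly positive; adding the nonnegative $K\sigma_Q^2$ and multiplying by $\gamma>0$ (which holds whenever $h>0$) shows the argument of the log exceeds $1$. This establishes $R^{CEO}_{sum}>R^{Dist}_{sum}$ for every finite $K$.

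For the convergence half, the large-$K$ asymptotics recorded after (\ref{alp_beta_defs}) give $\alpha,\beta=\Theta(K)$ and hence $\gamma=\Theta(K)$, so both the numerator and the denominator of the above ratio are polynomials in $K$ of the same degree. The log of the ratio is therefore $O(\log K)$, and dividing by $K$ shows that the per-user gap vanishes as $K\to\infty$. Combined with the explicit per-agent limits computed above, both $R^{CEO}$ and $R^{Dist}$ converge to the common value $\tfrac{1}{2}\log\!\left((\alpha+\sigma_Q^2-\beta)/\sigma_Q^2\right)$, with equality attained only in this limit.

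The main obstacle I expect is the algebraic simplification to $\gamma\bigl(f_1(K,\gamma)+K\sigma_Q^2\bigr)$: the expansion produces several cross terms carrying $(K-1)$ and $(K-2)$, and the cancellation relies crucially on the substitution $\beta^2=\alpha\gamma$. A useful sanity check is the degenerate limit $h\to 0$, in which $\beta,\gamma\to 0$, the inter-agent correlation vanishes, and the distributed and centralized sum rates should coincide; indeed $\gamma\bigl(f_1(K,\gamma)+K\sigma_Q^2\bigr)\to 0$ in this regime, as required. A conceptually cleaner but less self-contained alternative would be to argue the inequality directly from the information-theoretic observation that the centralized protocol discards the side-information term $\min_l I(\tilde U_k;Y_l\tilde U_1\cdots\tilde U_{k-1})$ present in the distributed rate bounds~(\ref{Rate_R1})--(\ref{Rate_Rk}); I would keep this as a backup if the explicit computation proves too unwieldy.
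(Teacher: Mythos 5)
Your proposal takes essentially the same route as the paper, which simply asserts that comparing (\ref{RsumDistFin}) and (\ref{RCEOsum}) verifies $R_{sum}^{CEO}>R_{sum}^{Dist}$ for every $\sigma_Q^2$ and notes that both per-user rates converge to $\tfrac{1}{2}\log\left((\alpha+\sigma_Q^2-\beta)/\sigma_Q^2\right)$; your consolidation of the gap into a single log and the identity (numerator minus denominator $=\gamma\,(f_1(K,\beta^2/\alpha)+K\sigma_Q^2)$) check out and in fact supply the verification the paper leaves to the reader. One minor inaccuracy that does not affect the conclusion: the numerator of your ratio is $\Theta(K^2)$ while the denominator is $\Theta(K)$ (the difference $(K-2)\beta-(K-1)\beta^2/\alpha$ vanishes only under the leading-order approximation of $\alpha,\beta$ and is in general $\Theta(K)$), so the ratio is $\Theta(K)$ rather than bounded, but its logarithm is still $O(\log K)$ and the per-user gap still vanishes after dividing by $K$.
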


\subsection{Outer Bounds}

From the symmetry of the model, it suffices to bound the rate $R_{1}$ of agent
$1$ as
\begin{align}
R_{1}  &  \geq\frac{1}{n}H(J_{1})\geq\frac{1}{n}I(Y_{1}^{n};J_{1}|Y_{2}%
^{n}Y_{3}^{n}\ldots Y_{K}^{n})\\
&  \geq h\left(  Y_{1}|Y_{2}\ldots Y_{K}\right)  -\frac{1}{n}%
{\textstyle\sum\limits_{i=1}^{n}}
h(Y_{1,i}|\hat{X}_{2,i}Y_{2,i}\ldots Y_{K,i})\label{ROB_3}\\
&  \geq h\left(  Y_{1}|Y_{2}\ldots Y_{K}\right)  -\frac{1}{2}\log(2\pi
e\Sigma) \label{ROB_5}%
\end{align}
where (\ref{ROB_3}) results from the fact that $\hat{X}_{2}^{n},\ldots\hat
{X}_{K}^{n}$ can be estimated from $J_{1},Y_{2}^{n},\ldots Y_{K}^{n}$, and
that conditioning on only one of the estimates is a lower bound on $R_{1},$
and (\ref{ROB_5}) results from using the fact that a jointly Gaussian
distribution maximizes the differential entropy for a fixed variance, from the
concavity of the $\log$ function for $\Sigma\equiv E\left[  var\left(
Y_{1}|\hat{X}_{1}Y_{2}Y_{3}\ldots Y_{K}\right)  \right]  .$ For jointly
Gaussian $\left(  Y_{1},\ldots,Y_{K},\hat{X}_{2}\right)  ,$ we can write
\begin{equation}
\hat{X}_{2}=Y_{2}+%
{\textstyle\sum\nolimits_{l=1,l\not =2}^{K}}
bY_{l}+Z \label{Xhat_OB}%
\end{equation}
where $Z\sim N\left(  0,\sigma_{Z}^{2}\right)  $ is independent of $Y_{k}$ for
all $k,$ and from symmetry, we choose the same scaling constant $b$ in
(\ref{Xhat_OB}). For $g\equiv E[\left(  \hat{X}_{2}-Y_{2}-bY_{3}\ldots
-bY_{K}\right)  ^{2}]=b^{2}/\left(  b^{2}\alpha+\sigma_{Z}^{2}\right)  $,
$c_{1}=\beta^{2}g,$ and $c_{2}=c_{1}+\left(  \beta-\beta\alpha g\right)
^{2}/\left(  \alpha-\alpha^{2}g\right)  ,$ we obtain
\begin{align}
R_{1}  &  \geq\frac{1}{2}\log\left(  \frac{f_{1}(K,\beta^{2}/\alpha)\left(
\alpha-\beta\right)  }{f_{1}\left(  K-1,\beta^{2}/\alpha\right)  }\right) \\
&  -\frac{1}{2}\log\left(  \frac{f_{1}\left(  K,c_{2}\right)  }{f_{1}\left(
K,c_{1}\right)  }\left(  \alpha-\alpha^{2}g\right)  \right)
\end{align}
where we have used the orthogonality of the minimum MSE estimate and the
measurements, i.e., $E\left[  \left(  X_{1}-\hat{X}_{1}\right)  Y_{l}\right]
=0,$ for all $l\not =1,$ and the distortion constraint in (\ref{Dist}).

With $\hat{X}_{2}$ in (\ref{Xhat_OB}), one can similarly bound $L_{1}^{\left(
j\right)  }=L_{1}^{\left(  2\right)  }$ (from symmetry), for all $j$, as
\begin{align}
R_{1}  &  \geq\frac{1}{n}I(X_{1}^{n};Y_{2}^{n}J_{1}J_{2}\ldots J_{K})\\
&  \geq h\left(  X_{1}\right)  -\frac{1}{2}\log\left(  2\pi eE\left[
var\left(  X_{1}|Y_{2}\hat{X}_{2}\right)  \right]  \right) \\
&  =\frac{1}{2}\log\left(  q_{1}\left/  \left(  \left(  1-\sigma_{X}^{2}%
q_{2}^{2}\right)  q_{1}-\sigma_{X}^{2}\left(  \sqrt{h}-q_{2}\right)
^{2}\right)  \right.  \right) \nonumber
\end{align}
where $g_{1}\equiv$ $E\left[  \left(  \hat{X}_{2}-Y_{2}\right)  ^{2}\right]  $
$\ =$ $(b^{2}\left(  K-1\right)  \alpha$ $+$ $\left(  K-1\right)  \left(
K-2\right)  b\beta/2+\sigma_{Z}^{2-1})^{-1},$
\begin{align}
q_{1}  &  \equiv\alpha-g_{1}b^{2}\beta^{2}\left(  K-1\right)  ^{2},\text{
and}\\
q_{2}  &  =g_{1}b^{2}\left(  1+\left(  K-2\right)  \sqrt{h}\right)
\beta\left(  K-1\right)  .
\end{align}

\begin{remark}
Due to the lack of a pre-log factor $K,$ the per-user rate $R$ for the outer
bound rapidly approaches $0$ with $K$ (relative to the inner bounds).
\end{remark}

The rate $R$ and leakage $L_{k}$ (for any $k)$ as a function of $K$ are
illustrated in Fig. \ref{Fig1} for $h=0.5$ and $\sigma_{Q}^{2}=6$.%

\begin{figure}
[ptb]
\begin{center}
\includegraphics[
trim=0.235704in 0.116518in 0.473838in 0.190108in,
height=2.6333in,
width=2.9879in
]%
{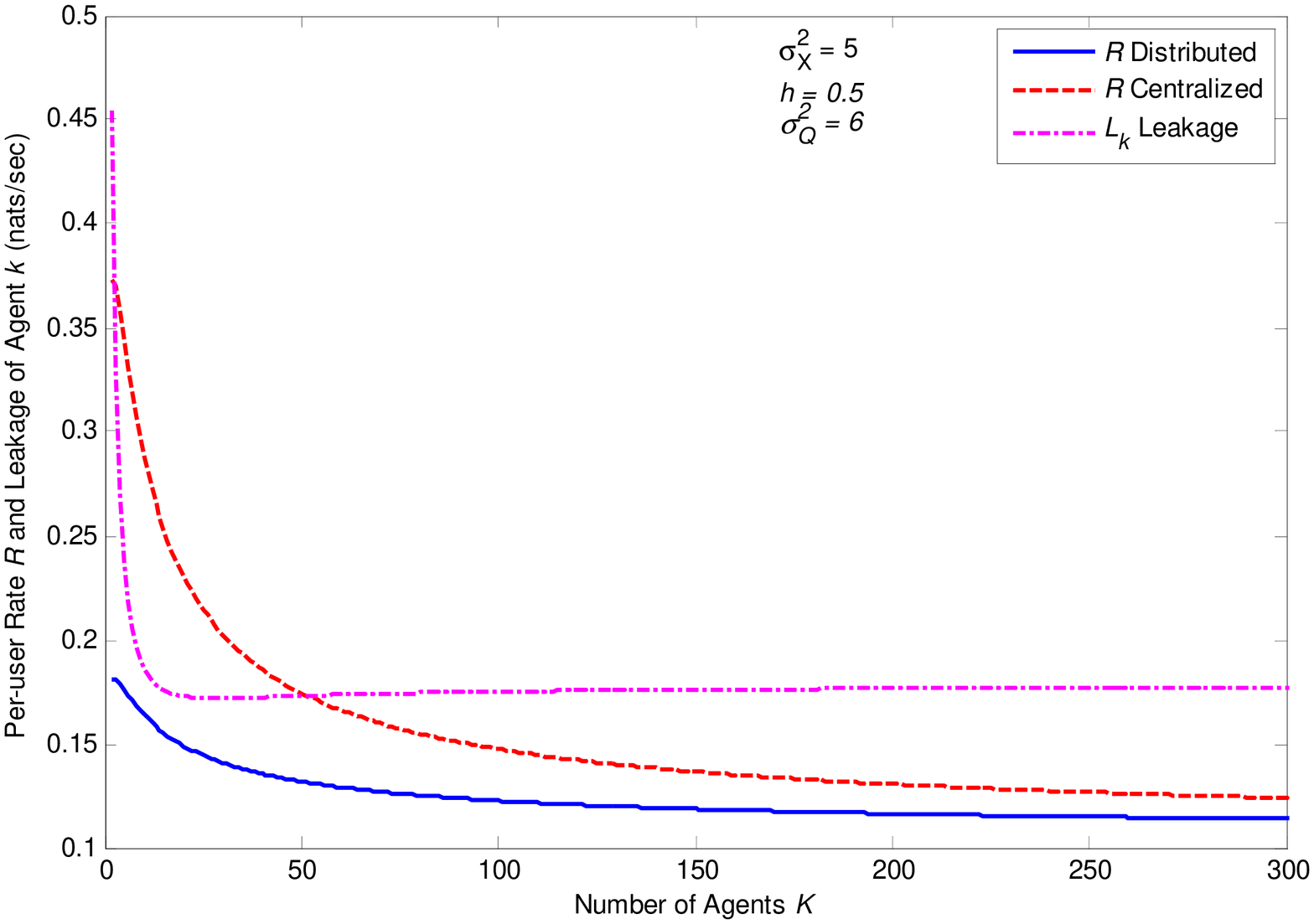}%
\caption{Plot of per-user rate $R$ and leakage $L_{k}$ of any agent $k$ vs.
$K$.}%
\label{Fig1}%
\end{center}
\end{figure}

\section{\label{Sec_CR}Concluding Remarks}

We have introduced a distributed state estimation problem among $K$ agents
with fidelity and privacy constraints. We have shown that the sum-rate and per
user rate achieved from a distributed protocol in which the agents directly
interact taking into account the prior knowledge at all agents lower bounds
those achieved by a centralized protocol with convergence for very large $K.$
Tighter outer bounds that account for the distributed coding are much needed.

\bibliographystyle{IEEEtran}
\bibliography{ascexmpl,refsSG}

\end{document}